\documentclass[journal,12pt,draftclsnofoot,onecolumn]{IEEEtran}
\usepackage{braket}
\usepackage{color}
\usepackage{flafter}   % make the floats appear after it has been called
\usepackage{amsfonts,amssymb,amsmath,calc}
\usepackage{mathrsfs,mathtools}
\usepackage{pstricks}
\usepackage{amsthm}
\usepackage{nicefrac}
\usepackage[english]{babel}
\usepackage[sort&compress,square,comma,numbers]{natbib}
\usepackage{caption}
\usepackage{graphicx}
\captionsetup{font=small}
\usepackage[pdftex]{hyperref}%

\usepackage[affil-it]{authblk}
\usepackage{bm} % this package define a bold math \bm{anything}
\usepackage{subfig}

\usepackage[T1]{fontenc}

\newcommand{\field}[1]{\mathcal{#1}}
\newcommand{\A}{{\field{A}}}
\newcommand{\B}{{\field{B}}}

% Vector of ones

% Other sets

 % for integration  

% Greek letters
\renewcommand{\epsilon}{\varepsilon}
\renewcommand{\theta}{\vartheta}
\renewcommand{\rho}{\varrho}
\renewcommand{\phi}{\varphi}

% Vector bold (comment out for arrow)
\renewcommand{\vec}{\bm}

% Comments

% Probability
\newcommand{\Prob}[1]{\operatorname{\mathbb{P}} \left(#1\right) }
\newcommand{\Exp}[1]{ {\operatorname{\mathbb{E}} \left[ #1 \right]} }

% Abbreviations
\newcommand{\ie}{{i.\,e.\ }}
\newcommand{\eg}{{e.\,g.\ }}

% Complexity

% \newcommand{\smallo}{\operatorname{\mathcal{o}}}

% Autofix parentheses
\let\originalleft\left
\let\originalright\right
\renewcommand{\left}{\mathopen{}\mathclose\bgroup\originalleft}
\renewcommand{\right}{\aftergroup\egroup\originalright}
\delimitershortfall=-1pt
\let\Right\right
\let\Left\left
\makeatletter
\def\right#1{\Right#1\@ifnextchar#1{\!}{}}
\def\left#1{\Left#1\@ifnextchar#1{\!}{}}
\makeatother

% Absolute value and norms
\DeclarePairedDelimiter\abs{\lvert}{\rvert}%
\DeclarePairedDelimiter\norm{\lVert}{\rVert}%
% Swap the definition of \abs* and \norm*, so that \abs
% and \norm resizes the size of the brackets, and the 
% starred version does not.
\makeatletter
\let\oldabs\abs
\def\abs{\@ifstar{\oldabs}{\oldabs*}}
\let\oldnorm\norm
\def\norm{\@ifstar{\oldnorm}{\oldnorm*}}
\makeatother

% Theorems
\theoremstyle{plain}
\newtheorem{theorem}{Theorem}
\newtheorem{problem}{Problem}

\theoremstyle{definition}
\newtheorem{lemma}[theorem]{Lemma}

\newtheorem{definition}{Definition}

\newtheorem{rmk}{Remark}
\newtheorem{model}{Model}

% repeated theorem
\newtheorem*{rep@theorem}{\rep@title}
\newcommand{\newreptheorem}[2]{%
	\newenvironment{rep#1}[1]{%
		\def\rep@title{#2 \ref*{##1}}%
		\begin{rep@theorem}}%
		{\end{rep@theorem}}}
\newreptheorem{theorem}{Theorem}
\newreptheorem{corollary}{Corollary}

\pdfminorversion=4
\usepackage{paralist}
\graphicspath{{./figures/}}
\usepackage{xspace}
\usepackage{dsfont}
\usepackage[normalem]{ulem}

\usepackage[affil-it]{authblk}
\newcommand{\knowledge}[1]{\mbox{#1-knowledge}}
\newcommand{\jvec}{{\vec{j}}}

\usepackage[smaller]{acronym}

\definecolor{halfgray}{gray}{0.55} % chapter numbers will be semi transparent .5 .55 .6 .0
\definecolor{webgreen}{rgb}{0,.5,0}
\definecolor{webbrown}{rgb}{.6,0,0}
\definecolor{RoyalBlue}{cmyk}{1, 0.50, 0, 0}
\hypersetup{%
 pdfstartview=FitV,%
unicode=true,bookmarks=true,bookmarksnumbered=true,bookmarksopen=true,bookmarksopenlevel=1,breaklinks=false,%
    pdfauthor={},%
    pdfsubject={},%
    pdfkeywords={},%
    pdfcreator={Alessandro Checco and Carlo Lancia and Doug Leith},%
    pdfproducer={LaTeX},%
    colorlinks=true, linktocpage=true, pdfstartpage=3, pdfstartview=FitV,%
    breaklinks=true, pdfpagemode=UseNone, pageanchor=true, pdfpagemode=UseOutlines,%
    plainpages=false, bookmarksnumbered, bookmarksopen=true, bookmarksopenlevel=1,%
    hypertexnames=true, pdfhighlight=/O,%hyperfootnotes=true,%nesting=true,%frenchlinks,%
    urlcolor=webbrown, linkcolor=RoyalBlue, citecolor=webgreen, %pagecolor=RoyalBlue,%
}%

%%added by Alex to improve plurals of acronyms
  \usepackage{etoolbox}
  \makeatletter
  \patchcmd{\AC@aclp}{\AC@acl{#1}s}{\AC@acl{#1}{\normalsize s}}{}{}
  \patchcmd{\AC@acsp}{\AC@acs{#1}s}{\AC@acs{#1}{\normalsize s}}{}{}
  \makeatother
%end acronym
\usepackage{siunitx}

\DeclareSIUnit{\dBm}{dBm}
\title{Updating Neighbour Cell List via Crowdsourced User Reports: \\ a Framework for Measuring Time Performance}
\author[1]{A.\ Checco}
\author[2]{C.\ Lancia}
\author[3]{D.J.\ Leith}
\affil[1]{University of Sheffield}
\affil[3]{Trinity College Dublin}
\affil[2]{Leiden University}

% Acronyms
%\acrodef{ACLD}{Adjacent Cell List Discovery}
\acrodef{ACL}{Adjacent Cell List}
\acrodef{NCLD}{Neighbour Cell List Discovery}
\acrodef{NCL}{Neighbour Cell List}
%\acrodef{TD}{Topology Discovery}
\acrodef{FK}{Full Knowledge}
\acrodef{AP}{Access Point}
\acrodef{MC}{Markov chain}
\acrodef{DSR}{Detected Set Reporting}
\acrodef{CPICH}{Common Pilot Channel}
\acrodef{UE}{User Equipment}
\newcommand{\bs}{base\-sta\-tion\xspace}
\newcommand{\bss}{base\-sta\-tions\xspace}
\newcommand{\ap}{node\xspace}
\newcommand{\aps}{nodes\xspace}

\newcommand{\Bss}{Base\-sta\-tions\xspace}

\newcommand{\Aps}{Nodes\xspace}
\begin{document}
\maketitle

\begin{abstract}
  % \textbf{the abstract is misleading: what we do is not introducing
  %   crowdsourcing, but rather estimating how long does it take to have
  %   enough knowledge when we use it!!!}
%Modern wireless networks deployments require knowledge of the \ac{NCL}.
In modern wireless networks deployments,  each serving \ap needs to keep its \ac{NCL} constantly  up-to-date to keep track of network changes.
The time needed by each serving \ap to update its \ac{NCL} is an important parameter of the network's reliability and performance. An adequate estimate of such parameter enables a significant improvement of self-configuration functionalities.

%Fast neighbour discovery allows for fast
%%upon pain of those algorithms not-converging or finding non-feasible solutions.
%However, for a given network deployment, the time needed by each serving \ap to update its \ac{NCL} may be larger  than the time to execute the optimisation algorithm. In this case, the \ac{NCL}-update time is clearly the dominant time scale to the solution of the optimisation problem.
%  As such, estimates of the \ac{NCL}-update time may give strong indications to whether a certain optimisation problem would benefit  from a decentralised approach for a given network deployment.

  This paper focuses on the update time of \acp{NCL} when an approach of crowdsourced user reports is adopted.
   In this setting, each user periodically reports to the serving \ap information about the set of \aps sensed by the user itself.
  We show that, by mapping the local topological structure of the  network onto states of increasing knowledge, a crisp mathematical  framework can be obtained, which allows in turn for the use of a  variety of user mobility models.
Further, using a simplified mobility model we  show how to obtain useful upper bounds on the expected time for a serving \ap  to gain full knowledge of its local neighbourhood.
  % In this way we provide tools   for assessing whether a class of network deployments can  effectively benefit
  % from a crowdsourcing approach  to obtain this otherwise elusive information.
\end{abstract}

\section{Introduction}
\label{sec:introduction}
\ac{NCLD} is a core process of modern wireless networks, especially when deployed in an unplanned and decentralised manner like WiFi hotspots and LTE femtocells~\cite{xiao2016dynamic}. In this scenarios each node needs to independently construct the NCL.
Further, appropriate knowledge of network topology, \ie the neighbourhood structure of each node in the network, allows the design of more efficient routing and interference-avoidance algorithms and improved allocation of limited network resources.
In a number of common situations, relying on explicit communication or on a central controller may be impractical or even impossible, for instance, when neighbouring devices belong to a different operator.
% Although \emph{Localization} of nodes  has been investigated because of its applications to geographical position discovery~\cite{roychouldhury2000mobile}, routing protocols problems~\cite{marwaha2002novel}, and ad-hoc networks configuration in general~\cite{roychoudhury2000distributed}.
%Working with local-only neighbourhood structures is natural in many situations.
Local knowledge of network topology is enough to produce distributed algorithms for channel allocation in WiFi networks, code selection in small cell networks, distributed graph colouring and routing, and also for problems of joint power and channel allocation optimisation (see Section~\ref{sec:related-work}).
% Importantly, the combination of \ac{NCLD} with distributed algorithms
% also offers the following benefits:
% \begin{itemize}
% \item[\bf Ability to cope with incomplete information.] The neighbourhood relations that exist between small
%   \aps deployed by users (in homes, hotspots etc) are typically not
%   known, nor easily discovered, at a complete/global level.
%   The network is therefore constrained to work with partial/local information.
% \item[\bf Scalability.]
%   Distributed algorithms requiring only local topology information
%   are inherently much more scalable than
%   centralised algorithms or algorithms that require greater topology
%   information.
% \item[\bf Adaptability.] Wireless networks are time-varying (e.g. due
%   to the rollout of new \aps, changes in the radio-propagation
%   environment, changes in the distribution of traffic load, etc).  The use
%   of local topology information potentially allows rapid adaptation
%   in response to local changes without a global reconfiguration of the
%   network.
% \end{itemize}

Although related to location discovery, the topic of this manuscript is the discovery of existing neighbours without targeting their actual geographical position.
We focus on the process of \ac{NCLD} via
\emph{crowdsourcing}, meaning that the task of detecting and
reporting the existence of conflicting neighbours is delegated to
users. In this framework, each user periodically reports to the serving \ap information about the set of neighbouring \aps observed, see \eg Figure~\ref{fig:antenne}. Exploiting \ac{UE} measurements is appealing because such technique is easy to implement and virtually cost-free. Nevertheless, the information received from \ac{UE} measurements is disregarded by the serving \ap in most implementations~\cite[\textsection 7.4.1]{homenode2011}.

\noindent Keeping the \ac{NCL} updated is fundamental for a number of reasons:
\begin{itemize}
\item Neighbouring cells can be added, removed, or temporarily offline.
\item The handover to a new cell might be problematic whenever it is not contained in the \ac{NCL} of the serving cell.
\item Some cells should not be added to the NCL list because they might reflect spurious measurements, yielding non-reliable handovers.
\item Neighbours with same PCI should be handled with specific solutions.
\end{itemize}
With these reasons in mind, this manuscript studies the time $T$ necessary to achieve confident knowledge of the \ac{NCL} through \ac{UE} measurements.
Estimating $T$ enables optimal tuning of neighbour cell list management schemes.
$T$ is  also important for the design and deployment of decentralised optimisation schemes. A key example is \emph{self-organisation}, a problem where the network nodes need to optimise their configuration without a central controller, \ie relying on local information only.
There exist many fast and efficient decentralised algorithms to self-organise a WiFi/femtocell network;
these algorithms are generally fed with a \ac{NCL}, which needs to be constantly kept up-to-date.
At implementation level, this means that each node needs to periodically estimate with a sufficient level of confidence which nodes of the network are potential conflicting neighbours.
The majority of  decentralised schemes require that each serving \ap needs local knowledge of the local neighbourhood~\cite{kuhn2006complexity,Johansson1999,luby1988removing,szegedy1993locality}, and any attempt to relax this hypothesis comes at the expense of performance, as shown in Section~\ref{sec:related-work}.

% In Section~\ref{sec:aclltd-model} we see that $T$ can be interpreted as the hitting time of a Markov process on a set of neighbourhood-knowledge states.
In the literature it is usually assumed that neighbourhood
information may be instantaneously acquired~\cite{olofsson1996concept,magnusson1997dynamic,nguyen2010efficient,amirijoo2008neighbor,atawia2012ranked,kim2010self,aziz2010autonomous}, \ie the time $T$ is considered negligible. In fact this assumption
may often not be valid, either because it is necessary to listen to
the channel long enough to get a high-confidence estimation or
because  hidden nodes/second-hop \ac{NCL} need to be
known as well, and thus it is necessary to use communication with users
or other nodes to obtain such information.
When the time needed by each serving \ap to update its \ac{NCL} is larger  than the time to execute the optimisation algorithm, a decentralised approach might not be the best solution.

% . In this case, the \ac{NCL}-update time is clearly the dominant time scale to the solution of the optimisation problem.
%%  As such, estimates of the \ac{NCL}-update time may give strong indications to whether a certain optimisation problem would benefit  from a decentralised approach for a given network deployment.

In a framework where the \ac{NCL} is built via crowdsourced user
reports, our main goal is to rigorously characterise $T$ and study its properties and bounds.
This is a problem that, to the best of our knowledge, has not yet been addressed in the literature.
%The results we obtain are of great importance for determining which class of problems and network deployments may benefit from a decentralised approach. Let $T_r$ be the running time of the considered self-configuration algorithm.
% If the average value of $T$ is
%%on a time scale that is
%larger than the average value of the time $T_s$ during which a given network may be considered static, i.e. not affected by network dynamics such as new nodes joining/leaving the network, then $\mathbb{E}[T + T_r] \gg \mathbb{E}[T_s]$ and a decentralised approach is likely to be ineffective; conversely, in situations where $\mathbb{E}[T + T_r] \ll \mathbb{E}[T_s]$ a decentralised approach can be effectively applied.
%
%In 3G-femtocell networks users would report both the ID and scrambling code of the \aps that lie within the handover range, as described in~\cite{patent:EP2214434}.
%The use of crowdsourcing for \ac{NCLD} is appealing because it is
%easy to implement and virtually cost-free: for instance, in 3G-femtocell networks users already periodically report the list of \aps within handover range, but this information is disregarded by the serving \ap in most implementations~\cite[\textsection 7.4.1]{homenode2011}.
%%%%%%%%%%%%%%%%%%%%
Our main contributions are the following:
\begin{inparaenum}[(i)]
  \item  the problem of user-reports-based \ac{NCLD} is
    stated for the first time  through a crisp mathematical
    formulation;
    \item we introduce a simple mobility model that
      is useful for gaining insight into those situations where
      crowdsourcing via user reports is likely to yield the greatest
      benefit to a decentralised approach;
  \item   we show that this model can  provide an upper bound on the time to topology discovery, thus it  can be used as a design tool (see  Section~\ref{sec:gener-mobil-model}).
\end{inparaenum}
\begin{figure}
  \centering
  \includegraphics[width=0.75\columnwidth]{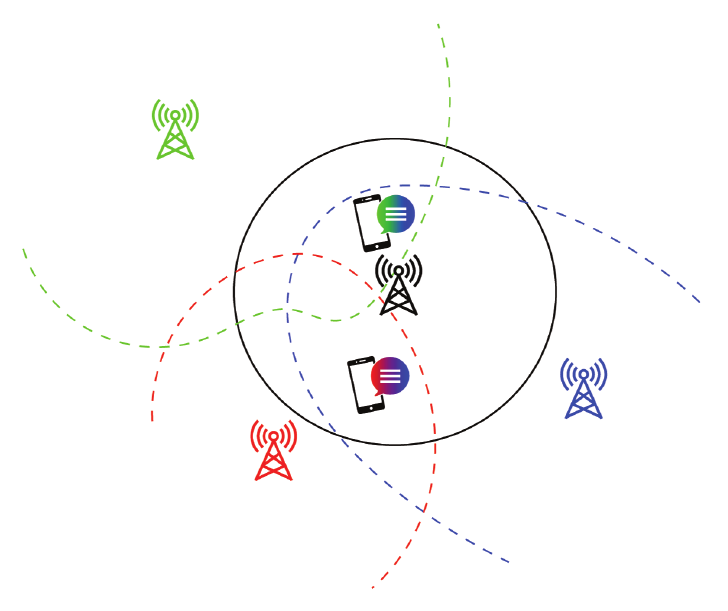}
  \caption{Example of \ac{NCLD}. Each user report to the
    serving \ap which neighbouring \aps they observe.}
  \label{fig:antenne}
\end{figure}

The rest of the paper is organised as follows: in Section~\ref{sec:related-work} we present the related work, then, in Section~\ref{sec:guidelines} we show some practical use cases where our approach can be applied.
In Section~\ref{sec:aclltd-model} we provide a mathematical model for the discovery process and in Section~\ref{sec:problem-statement} we define the problem in the context of such model and give some useful bounds.
We present simulation results to validate the model and show how it can be used as a network design tool in Section~\ref{sec:simulations}. Finally, in  Section~\ref{sec:conclusions} we draw the conclusions.

 \section{Related Work}
 \label{sec:related-work}
In the field of decentralised algorithm design, it has been shown that local knowledge of network topology is enough to produce a distributed algorithm for resource allocation; such local knowledge also allows the minimisation of scrambling-code collision and confusion in small cell networks, see~\cite{checcoself} and references therein.
This knowledge is sufficient, and in a certain sense necessary to build efficient algorithms, as the attempts to relax the hypothesis that each serving node needs knowledge of the local neighbourhood will result in an extreme loss of performances~\cite{duffy2013decentralized, checco2013learning}, that can be prevented only in specific scenarios where the interference model can be described with a simple graph~\cite{checco2017fast}. 

Perhaps the main motivation of this manuscript is the work of~\cite{watanabe2011dynamic}, where the authors propose a neighbour cell list management scheme based on the long-term statistics of UE measurements.
A key parameter of this scheme is the \emph{forgetting factor} $\gamma$, that weigh the longitudinal \ac{UE} measurements. Such parameter is clearly not trivial to tune, especially in settings, very common in wireless and cellular networking, where instantaneous cell list acquisition cannot be assumed. 
The optimal tuning $\gamma$ is only possible by studying the statistical properties of the time $T$ necessary to achieve confident knowledge of the \ac{NCL} through  \ac{UE} measurements. 

The user report function is already available in commercial
femtocells~\cite{patent:EP2214434} and small cells networks, and its implementation for
code confusion and interference reduction is recommended in~\cite{checcoself,edwards2008implementation}.
 
Crowdsourcing approaches have been investigated for different applications, \eg for estimating both density and number of attendees of large events~\cite{EPFL-ARTICLE-187617}.
Many works pertain the use of crowdsourcing for \ac{NCL} discovery. In~\cite{soldani2007self}, the use of mobile measurement to update the \ac{NCL} of macrocells after deployment has been studied: since the intra-frequency reporting function, known as \ac*{DSR},  is energetically costly for the mobile device, the use of it is suggested only in critical situations where a problem with the current \ac{NCL} is known. A similar  case, where the \ac{NCL} needs to be updated when a new macrocell is deployed, is studied in~\cite{parodi2007automatic}.
Other ways to dynamically build the \ac{NCL} via crowdsourcing  are presented in~\cite{hasan2010automatic, watanabe2011dynamic}. A similar work applied to WiMax is presented in~\cite{li2007study}, and a closely related approach for the femtocell case is presented in~\cite{becvar2013dynamic}. 

With this work, we address  the problem of estimating the \ac{NCL} construction time, which is necessary to assess whether crowdsourcing is effective in a particular network deployment. However, to the best of our knowledge, this problem has not been addressed in the literature~yet.

\section{Use Cases}
\label{sec:guidelines}
We show in this section some timely use cases where our proposed framework can be applied as a network design tool.

\subsection{3G Network Optimisation}
\label{sec:self-optimising-3g}

In order to provide seamless mobility and a satisfactory service, the optimisation of the handover function is fundamental in modern 3G cellular networks. To achieve that, the construction of a reliable \ac{NCL} is one of the most critical tasks.
While in the past this was achieved by drive and walk testing, the needs to adapt to changes in the network and to reduce the cost require different solutions~\cite{watanabe2011dynamic,kim2010self,soldani2007self}.

The so-called \ac{DSR} is an intra-frequency 3GPP functionality that allows users to report cells not defined in the \ac{NCL}. In this way, whenever a macrocell detects a problem, or when a new cell is deployed~\cite{parodi2007automatic}, such a function can be activated.
The only disadvantage is that such functionality is energetically costly for the mobile device, so its use is recommended for short periods of time and only in critical situations where a problem with the current \ac{NCL} is known. Therefore, an estimation of the optimal time to keep the \ac{DSR} active is
required. %Our work will allow to perform such estimation.
Our work provides an effective framework to make such estimation possible.

\subsection{Small cells Self-configuration}
\label{sec:femt-self-conf}
An important problem that affects the small cells deployment for residential use is code selection. In 3G, \bss have only few scrambling codes available, making the task of selecting the optimal allocation challenging.
Moreover, communication with a central controller is discouraged, to avoid signalling overhead.  In 4G and 5G, Physical Cell Identity codes and  5G scrambling codes have similar problems.

A fully decentralised algorithm that can converge to the optimal confusion- and collision-free code allocation has been devised in~\cite{checcoself}. However, it relies on the assumption that small cells are able to construct  their \ac{NCL}. Unfortunately, small cells are often not able to detect first- and second-hop neighbours reliably due to hidden-node effects and the absence of an efficient sniffing \ac{CPICH} mechanism.
A technique to construct the \ac{NCL} via crowdsourcing has been proposed by~\cite{becvar2013dynamic}.
However, the implementation of such a technique would first require the evaluation of the time scale of the \ac{NCL} construction and its comparison with the time scale of the convergence of the  code allocation algorithm.

\section{\acl{NCLD} Model}\label{sec:aclltd-model}
Given a set of wireless \aps $\A = \{a_0,\dots,a_N\}$, let $A(a_i)
\subset \mathbb{R}^2$ denote the coverage area of access point
$a_i$. Please note that $A(a_i)$ generally depends on the
transmission power of $a_i$ and on the radio propagation properties of
the medium.  We focus on serving access point $a_0$ and let $\B$ denote
the neighbouring \aps that have non-void intersection with $A(a_0)$,
\ie
  \[\B = \{a_i \in \A, i>0  \colon A(a_0) \cap A(a_i) \neq \emptyset \}.\]
  We will hereafter use the symbol $N$ to denote the cardinality of
  $\B$, \ie $N = |\B|$.

  Let $\field{P}(\B)$ denote the powerset of $\B$.
  A \emph{tessellation} of the area $A(a_0)$ is the collection of tiles $\{A_{\vec{i}}\}_{\vec{i} \in \field{P}(\B)}$ such that
  \begin{equation}
    \label{eq:1}
    A(a_0)=\bigcup_{\vec{i} \in \field{P}(\B)}A_{\vec{i}}\,,
  \end{equation}
  where
  \begin{align}
    % A(a_0) =& \bigcup_{\vec{i} \in \field{P}(\B)}A_{\vec{i}},\\
    \notag  A_{\vec{i}} =&  \bigcap_{j \in \vec{i}} A(a_j) \cap A(a_0)    \setminus\bigcup_{j \not\in \vec{i}} A(a_j), \quad \vec{i}\neq \emptyset, \\
    A_\emptyset =& A(a_0) \setminus \bigcup_{\vec{i} \in \field{P}(\B)\setminus \emptyset} A_{\vec{i}}.\label{eq:tessellation}
  \end{align}
  In what follows each element $A_{\jvec}$ composing the tessellation
  is referred to as a \emph{tile}, and we  will use the vector
  notation $\jvec$ to represent a set of neighbouring \aps.
  Let us consider for example $\vec{i}=\{a_{1},a_{2}\}$; then, the
  tile $A_{\vec{j}}$ is the portion of $A(a_0)$ that
  is covered by $a_{1}$ and $a_{2}$ only, see Figure~\ref{fig:antenne_b}.
\begin{figure}
  \centering
  \includegraphics[width=0.75\columnwidth]{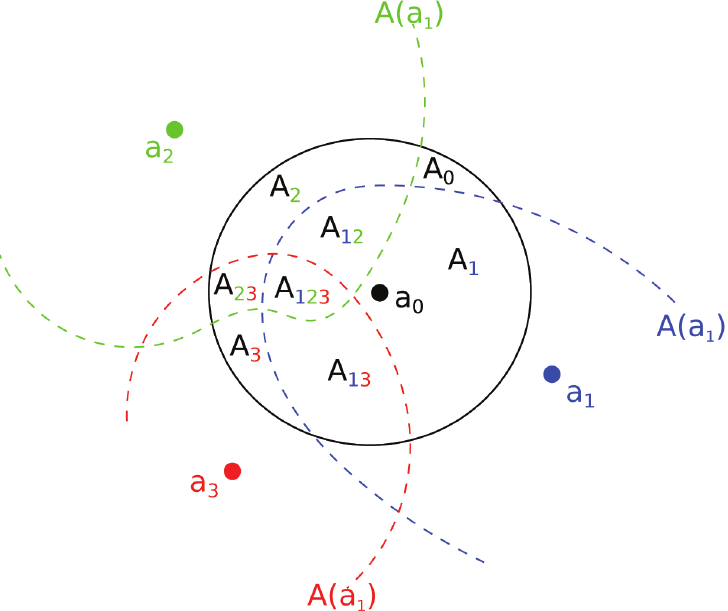}
  \caption{Example of tesselation corresponding to scenario of Figure~\ref{fig:antenne}. The scenario is modelled with a serving \ap, $a_0$, with
    three interfering  neighbours, $a_1,a_2,$ and $a_3$; the coverage
    area of the serving \ap, $A(a_0)$, can be tessellated with the
    sets $A_0, A_1,A_2,A_3,A_{12},A_{13},A_{23},A_{123}$.
  For simplicity of notation, we will write $A_0 \coloneqq A_\emptyset$.}
  \label{fig:antenne_b}
\end{figure}

  \begin{figure}
    \centering
    \includegraphics[width=0.7\columnwidth]{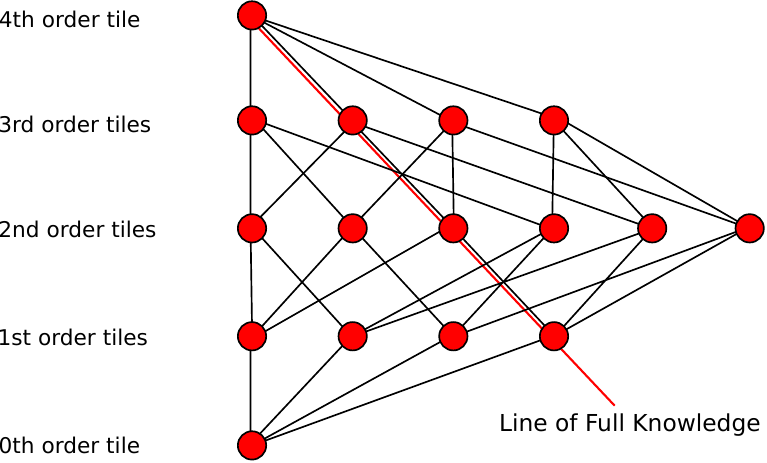}
    \caption{Hypercube representation of the tessellation for $N=4$. There is one zeroth order tile,
      namely $A^C$, four first order tiles, $A_1, A_2, A_3$ and $A_4$, six second order tiles,
      $A_{12}, A_{13}, A_{14}, A_{23}, A_{24}, A_{34}$, four third order tiles, $A_{123}, A_{124}, A_{134}, A_{234}$
      and a fourth order tile, \ie $A_{1234}$.}
    \label{fig:tart1}
  \end{figure}

  Whenever a user is in $A_{\vec{j}}$, it will \emph{report} $\vec{j}$ to  access point   $a_0$. In other words, $a_0$ will be aware of the existence of those
  neighbouring \aps $a_i \in \vec{j}$. The rate  of these
  reports depends on the mobility model assumed (see Section~\ref{sec:problem-statement}).

  To keep the model as conservative as possible, and to
  encompass the frequent case of half-duplex \aps, we assume
  $a_0$ cannot detect the existence of any neighbour even though $a_0$ lies in one of the neighbours coverage area.

  Let $\field{K}_t$ denote the \emph{knowledge set} of access point $a_0$, \ie the set of neighbours that $a_0$ is aware of, at time $t$.
  Given a sequence of reports $\{\vec{j}^1, \dots, \vec{j}^t\}$, we have that
  $\field{K}_t = \bigcup_{i=1}^t \vec{j}^i$. $\field{K}_t$ is a
  sequence of sets that satisfies
  \begin{equation}
    \label{eq:nested}
\field{K}_t = \bigcup_{s=0}^t \field{K}_s;
  \end{equation}
in particular, $\abs{\field{K}_t}$ is non-decreasing in $t$.  Clearly, the knowledge state at time $t$, $\field{K}_t$, take values in $\field{P}(\B)$.
  \begin{definition}[\acl{FK}]
    Given an integer $T$ and a finite sequence of reports
    $\{\vec{j}^1, \dots, \vec{j}^T\}$, the \ap $a_0$ is said to
    have \emph{\acl{FK}} (\acs{FK})
    of its neighbours at time $T$ if
    \[\field{K}_T =\bigcup_{s=1}^T \vec{j}^s = \B.\]
  \end{definition}
  \begin{rmk}
    If $a_0$ has \ac{FK} of its neighbours at time $T$, so it has at
    all times $T+t$ for $t \geq 0$.  In other words, once $a_0$ has reached
    \ac{FK}, it cannot lose it.
  \end{rmk}
  \begin{definition}[First time to \acs{FK}]
    Given a sequence of reports $\{\vec{j}^1, \vec{j}^2, \dots\}$, the
    \emph{first time to \acs{FK}} $\tau$ for the \ap $a_0$ is the first
    time the latter reaches \acs{FK} of its neighbours, \ie
    \begin{equation}
      \label{eq:2}
      \tau \coloneqq \min\{T \geq 0 \text{ such that } \field{K}_T = \B\}\,.
    \end{equation}
  \end{definition}
  \begin{rmk}
    \label{rmk:concerto_no4}
    The characterisation of the first time to \ac{FK} generally depends  on the realisation of a sequence of user reports; this
    means that $\tau$ is a random variable. More precisely,
    by~\eqref{eq:2}, $\tau$ is a \emph{stopping time}, see~\eg
    \cite{levin2009markov}.
    % since the event
    % $\{\tau = T\}$ is determined by $\vec{j}^1, \ldots, \vec{j}^T$ only.
  \end{rmk}

  We end the section with a note on the tessellation:
  \begin{rmk}
    \label{rmk:concerto_no5}
    A generic tessellation of $\B$ can be represented as a hypercube
    $H=\{0,1\}^N$ by identifying the vertices of $H$ with the tiles
    $A_\jvec$ that the tessellation is composed of.  The number of
    tiles of a generic tessellation of $\B$ is $2^N$ as well as the
    vertices of a hypercube, represented as vectors of size $N$.  The
    tiles of the tessellation can be mapped onto the vertices of the
    hypercube by identifying the \mbox{$i$-th} component of the
    vertices $x\in H$ with $a_i \in \B$.  In other words,
    \[A_\jvec \leftrightarrow x \quad \Leftrightarrow\quad x_i =
    \mathds{1}_{\{a_i\in\jvec\}} \,,\; i=1,\ldots,N,\] where
    $\mathds{1}$ is the indicator function.  We define the
    \emph{order} of a tile as the number of neighbours a report from
    that tile would give knowledge of; the number of \mbox{$k$-th}
    order tiles is ${N \choose k}$. A report from a \mbox{$k$-th}
    order tile is equivalent to $k$ first-order reports.  In
    particular, \ac{FK} is attained with a report from the
    \mbox{$N$-th} order tile, or at least two reports from two
    distinct \mbox{$(N-1)$-th} tiles, etc.  This property can be
    graphically represented by what we call the \textit{Line of
      \acl{FK}}, see Figure~\ref{fig:tart1}. The line of \ac{FK} is
    clearly not unique\footnote{For example, there are $N$ tiles of
      order $N-1$, but only $2$ are part of a given line of \ac{FK}.};
    the aim of Figure~\ref{fig:tart1} is only to illustrate that a
    sequence of $T$ reports $\{\vec{j}^1, \dots, \vec{j}^T\}$ is a
    path on the hypercube $H$, and that \ac{FK} is attained whenever a
    line of \ac{FK} is reached at a time smaller than $T$.

    Since $\field{K}_t$, the knowledge state at time $t$, takes on
    values in the same set $\field{P}(\B)$, we can also map the
    knowledge states on the hypercube $H$. That is, a sequence
    of reports $\{\vec{j}^1, \vec{j}^2, \dots, \vec{j}^t\}$ is
    equivalent to a single report from tile $\bigcup_{s=1}^t \vec{j}^s =
    \field{K}_t$.
\end{rmk}

We can now define the main problems of this work.
\begin{problem}[Expected first time to \acl{FK}]\label{problem:firsttime}
  Given an access point $a_0$, a set $\B$ of neighbours with given
  position and coverage area, and a sequence of user reports, we want to characterise the expectation
  of the \emph{first time to \ac{FK}}, \ie
  \[
    \mathbb{E}(\tau) = \sum_{t\geq1} t\, \mathbb{P}(\tau=t)\,.
  \]
\end{problem}

Obviously, the way the user(s) moves inside the coverage area $A(a_0)$
heavily affects the difficulty of the problem and its answer.
However, the formulation of Problem~\ref{problem:firsttime} has the
great advantage of decoupling the notion of \ac{FK} from the user
mobility model;
% not depending  on the user mobility; %, as long as this is Markovian.
addressing the mean value of the first time to \ac{FK} is also an
enabler to the estimate of the tail of the distribution of $\tau$~--~through Markov's inequality, for example.  Further, from a numerical
point of view, the expected time to \ac{FK} may be achieved via a
Monte Carlo simulation once the set $\B$ and the mobility model in use
are fixed.%; an upper bound on

  There may exist cases where it is only necessary to characterise the
  first time to attain partial knowledge of the local topology.
  For example, we may be interested in the first moment when the
  neighbouring \aps that have been already discovered, \ie{} the
  elements of the knowledge set $\field{K}_t$, are enough to describe
  a given fraction of the local topology. This idea motivates the following

  \begin{problem}[Expected first time to \knowledge{$\delta$}]
    \label{problem:deltaknowledge}
    Let $\rho$ be a measure over $\mathcal{P}(\B)$ and fixed $\delta
    \in (0,1]$.
    %Let $\rho\colon\mathcal{P}(\B)\to[0,1]$ be a probability measure
    Given an access point $a_0$, a set $\B$ of neighbours with given
    position and coverage area, and a sequence of user reports, we
    want to characterise the expectation of the \emph{first time to
      \knowledge{$\delta$}} $\mathbb{E}(\tau_\delta)$, where
    \[
      \tau_\delta = \min\left\{T\geq0 \text{\emph{ such that }}
        \frac{\sum\limits_{\vec{k} \in \operatorname{\field{P}}(\field{K}_T)}
      \rho\left(A_{\vec{k}}\right)}{
      \sum\limits_{\vec{j} \in \operatorname{\field{P}}(\B)}
      \rho\left(A_{\vec{j}}\right)} \ge \delta \right\}\,.
      %\frac{\abs{A_{\vec{k}}}}{\norm*{A(a_0)}} \ge \delta \right\}\,.
    \]
  \end{problem}
    When $\delta=1$ and $\rho(A_{\vec{j}}) > 0$ for each $\vec{j} \in
    \operatorname{\field{P}}(\B)$,
    Problem~\ref{problem:deltaknowledge} is equivalent to
    Problem~\ref{problem:firsttime}. Indeed,
    $\nicefrac{\sum\limits_{\vec{k} \in
        \operatorname{\field{P}}(\field{K}_T)}
      \rho\left(A_{\vec{k}}\right)}{
      \sum\limits_{\vec{j} \in \operatorname{\field{P}}(\B)}
      \rho\left(A_{\vec{j}}\right)} \geq 1$ if and only if
    $\field{K}_T\equiv \B$.

    We will hereafter consider the Lebesgue measure
    $\rho(A_{\vec{k}}) = \norm{A_{\vec{k}}}$. This leads to the
    following interpretation: \knowledge{$\delta$} is attained when the
    knowledge set $\field{K}_t$ defines
    for the first time a tessellation that covers
    a fraction of  $A(a_0)$ larger or equal than $\delta$.
    % Conversely, \knowledge{$\delta$}
    % with $\delta<1$ means that we are content to characterise the
    % first time in which the tiles contained in the knowledge set
    % %$\field{K}_T$
    % define a tessellation that covers a fraction of
    % $A(a_0)$ larger or equal than $\delta$.
    Equivalently, $\tau_\delta$ is the first time when the tiles that
    would give new information\footnote{In the sense that the
      cardinality of the knowledge set $\field{K}_T$ would increase.}
    cover a fraction of $A(a_0)$ that is smaller than $1-\delta$.
    %%%%%%%%%%%%%
    % in the sense that the probability that a
    % user at next step will be located in a zone covered by any of
    % these neighbours is less than $1-\delta$.
    %%%%%%%%%%%%%

  \begin{rmk}
    The concept of \knowledge{$\delta$} is fundamental in the
    simulation phase, when we want to know whether user reports can
    effectively be used to give knowledge of the local
    topology. Indeed, it is likely that the neighbours $a_i$ whose
    coverage area do not overlap with $A(a_0)$ save for a nearly
    negligible portion, will be discovered after a very long time; in
    other words, the leading contribution to $\mathbb{E}(\tau)$ will
    be represented by the mean first visit time of the user(s) to
    $A(a_i)$. Discarding $a_i$ from the picture, the concept of
    \knowledge{$\delta$} let us focus on the quantitative analysis of
     \ac{NCLD}, see Section~\ref{sec:simulations}.
  \end{rmk}

\section{Teleport Mobility}
\label{sec:problem-statement}

  The characterisation of $\tau$, the first time to \ac{FK},
  depends on the assumed user's mobility model:
  it describes how users enter, exit, and move within $A(a_0)$.
The users evolution can then be represented as a pair $U_t=(n_t, X_t)$,
  where $n_t$ is the number of users that lie in $A(a_0)$ at time $t$,
  and $X_t = (x_t^1, x_t^2, \ldots, x_t^{n_t})$ is a vector with
  the position of the $n_t$ users.
  We assume the evolution of $U_t$ to be driven by a discrete-time
  \ac{MC} throughout the paper.

  The realisation of $\{U_t\}_{0\leq t\leq T}$ completely determines
  the sequence of user reports $\{\vec{j}^1, \dots, \vec{j}^T\}$ to
  the access point $a_0$, cf.\ Remark~\ref{rmk:concerto_no4}.
  Since $\field{K}_t$ only depends on $\field{K}_{t-1}$ and $U_t$, then
  the bivariate process $(U_t,\field{K}_t)$ is a \ac{MC}.

% \subsection{Mobility Model}
% \label{sec:mobility-model}
    It will prove useful to consider a simplified mobility model in which a
    single user continuously teleport between tiles, without leaving $A(a_0)$\footnote{This model will be extended to many users and to more general models in Section~\ref{sec:gener-mobil-model}}:
\begin{model}{(Teleport Mobility)}
\label{model:rain}
A single user moves within $A(a_0)$ according to a discrete-time
\ac{MC} taking on values in $\mathcal{P}(\B)$. At any time the user cannot abandon the whole region, \ie it is constrained within
$A(a_0)$.  At each step, the user
instantaneously teleports with a probability that is proportional to
the measure of the destination tile\footnote{Note that the actual position within a tile is undefined in this model.}. The destination tile can also be the same tile of previous step, meaning that the user would remain on the same tile during that discrete time step.
Assuming that all tiles are Lebesgue-measurable plane sets,  the transition probabilities are
  \begin{equation}
    \label{eq:poissonrain}
    \Prob{\vec{i} \rightarrow \vec{j}} = \frac{\norm*{A_{\vec{j}}}}{\norm*{A(a_0)}},
  \end{equation}
where $\norm*{\cdot}$ denotes the Lebesgue measure.
\end{model}
\begin{rmk}
  Model~\ref{model:rain} greatly simplifies the
  characterisation of $\tau$, the first time to \ac{FK}.
  Indeed, in this mobility model, $\field{K}_t$ is independent of $U_t$, and
  the sole process $\field{K}_t$ is hence sufficient to
  describe the process of gathering knowledge from the user reports.
  We will hereafter refer to $\field{K}_t$ as the \emph{knowledge chain}.
\end{rmk}

Assuming Model~\ref{model:rain}, we can easily describe the process of
gathering knowledge from user reports as a discrete-time
random walk on the hypercube $H=\{0,1\}^N$ (which we have introduced in
Remark~\ref{rmk:concerto_no5}); having knowledge of
$n$ neighbouring \aps is in fact equivalent to receiving a report from the
$n$-th order tile that give information about all of them.

Let $P(\cdot,\cdot)$ be the transition kernel of the knowledge chain.
If $\vec{k} \not\subseteq \vec{l}$, then~\eqref{eq:nested} guarantees that $P(\vec{k},\vec{l})=0$ because such transition would mean a loss of knowledge.
Conversely, when $\vec{k} \subseteq \vec{l}$, a transition from
$\vec{k}$ to $\vec{l}$ happens if the user moves to a tile that
contains the missing information $(\vec{l}\setminus \vec{k})$ and
does not add more information than that.
%; those are the tiles contained in the set $\field{P}(\vec{k})$.
Therefore,
\begin{equation}
  \label{eq:P}
 P(\vec{k},\vec{l}) =
 \begin{cases}
    \sum\limits_{\vec{m} \in \field{P}(\vec{k})}  \frac{\norm{A_{\{ \vec{m}
      \cup (\vec{l} \setminus \vec{k}) \}}}}{\norm*{A(a_0)}}    & \text{if
    $\vec{k} \subseteq \vec{l} $},\\
  \qquad 0     & \text{otherwise}.
 \end{cases}
\end{equation}
%########################################
% \begin{rmk}
%   A mobility model $M$ induces a weight function $w(\cdot)$, so from now on, we will identify $M$ with  $w(\cdot)$.
% \end{rmk}
% The evolution over time of the local topology knowledge can hence be modelled as \ac{MC}.
% This process is defined over the pairs $(\vec{i}_t,\field{K}_t)$, where $\vec{i}_t$ is the tile visited at time $t$
% and $\field{K}_t$ is the knowledge set at time $T$.
%########################################

The following result holds:
\begin{lemma}
  \label{thm:tng}
  The matrix $P$ is upper triangular.
\end{lemma}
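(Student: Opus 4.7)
The plan is to exhibit an explicit ordering of the state space $\mathcal{P}(\B)$ under which the transition kernel $P$ has the claimed upper triangular structure. The core observation is already baked into the case distinction in~\eqref{eq:P}: the chain only transitions from $\vec{k}$ to sets $\vec{l}$ with $\vec{k}\subseteq\vec{l}$, so transitions respect the subset partial order on $\mathcal{P}(\B)$. Hence any linear extension of $(\mathcal{P}(\B),\subseteq)$ induces an upper triangular kernel.

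My concrete choice is to order the $2^N$ states by non-decreasing cardinality, breaking ties arbitrarily (e.g.\ lexicographically). List the states as $\vec{s}_1,\vec{s}_2,\ldots,\vec{s}_{2^N}$ so that $|\vec{s}_p|\le |\vec{s}_q|$ whenever $p\le q$. I would then verify the triangular property by picking any two indices $p>q$ and showing $P(\vec{s}_p,\vec{s}_q)=0$. By the ordering, $|\vec{s}_p|\ge |\vec{s}_q|$; if $|\vec{s}_p|>|\vec{s}_q|$, then $\vec{s}_p\not\subseteq \vec{s}_q$ for purely cardinality reasons, while if $|\vec{s}_p|=|\vec{s}_q|$, the two sets are distinct (since $p\neq q$) and of equal size, so again $\vec{s}_p\not\subseteq \vec{s}_q$. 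Either way, the second branch of~\eqref{eq:P} applies and $P(\vec{s}_p,\vec{s}_q)=0$, which is exactly the entry below the diagonal being zero.

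There is no real obstacle here: the statement is essentially a restatement of the monotonicity property of $\field{K}_t$ (cf.\ the observation $|\field{K}_t|$ is non-decreasing) combined with the definition~\eqref{eq:P}. The only delicate point is to make clear that an ordering on the abstract index set $\mathcal{P}(\B)$ is being fixed so that the word \emph{upper triangular}, which is a statement about indexed matrices, makes sense. I would close by noting, as a remark that is useful later, that under this ordering the diagonal entries $P(\vec{k},\vec{k})=\sum_{\vec{m}\in\field{P}(\vec{k})}|A_{\vec{m}}|/|A(a_0)|$ are nonzero in general and that the absorbing state $\B$ sits in the last position, foreshadowing a first-passage analysis for $\tau$.
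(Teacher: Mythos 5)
Your proposal is correct and follows essentially the same route as the paper: both observe that \eqref{eq:P} forces $P(\vec{k},\vec{l})=0$ unless $\vec{k}\subseteq\vec{l}$, and then take a linear extension of the subset partial order (ordering by cardinality with lexicographic tie-breaking) to obtain the upper triangular form. Your explicit two-case check that $\vec{s}_p\not\subseteq\vec{s}_q$ for $p>q$ is a slightly more detailed verification of the same argument.
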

  \begin{IEEEproof}
    Let us consider the following partial ordering relation among the states:
    \[\vec{k}\preceq\vec{l} \quad\Leftrightarrow\quad
    \vec{k}\subseteq\vec{l}\,.\]
    By~\eqref{eq:P}, $P(\vec{k},\vec{l})\ne0$ only if
    $\vec{k}\preceq\vec{l}$. Therefore, any mapping
    \[\mathcal{P}(\B)\ni\vec{l} \quad\leftrightarrow\quad
    l\in\{1,2,\ldots,2^N\}\]
    such that \[\vec{k}\preceq\vec{l} \quad \Leftrightarrow \quad
    k\leq l\]
    will put the matrix $P$ into an upper triangular form. In
    particular, we can order the states by increasing cardinality and in
    lexicographic order\footnote{For $N$ neighbouring \aps, \ie with
      $2^N$ different tiles, this would mean the sequence $\{1\},\{2\},\dots,\{N\},\{1,2\},\dots,\{\mbox{N-1},N\},\dots,\{1,2,\dots,N\}$.}.
  \end{IEEEproof}

The explicit computation of the whole matrix $P$
using~\eqref{eq:P} is expensive in general since $P$ is a $2^N\times
2^N$ matrix.
However,  as stated above, $P$ is upper triangular. In Section~\ref{sec:eigenvalues} we show that it is possible to
explicitly characterise its spectrum.
For the reader's reference, Table~\ref{tb:example} shows the matrix
$P$ for $N=3$.
\begin{center}
\begin{table*}[htbp]
\newsavebox\mybox
\savebox\mybox{%
$P = \frac{1}{\norm*{A(a_0)}}  \begin{pmatrix}\norm*{A_{0}} &  \norm*{A_1} & \norm*{A_2} &  \norm*{A_3} &  \norm*{A_{12}}&  \norm*{A_{13}} & \norm*{A_{23}} &  \norm*{A_{123}} \\ 0 & \norm*{A_0} + \norm*{A_1} & 0 &  0 &  \norm*{A_2} + \norm*{A_{12}} &\norm*{A_3} + \norm*{A_{13}}  &  0 & \norm*{A_{23}}+ \norm*{A_{123}}   \\ 0 & 0 & \norm*{A_{0}} + \norm*{A_2} & 0 &  \norm*{A_1} + \norm*{A_{12}}&  0 &  \norm*{A_{3}} + \norm*{A_{23}} & \norm*{A_{13}}+ \norm*{A_{123}} \\ 0 & 0 & 0 &  \norm*{A_{0}} + \norm*{A_{3}} & 0  &  \norm*{A_{13}} +\norm*{A_{1}} &  \norm*{A_{2}} + \norm*{A_{23}}&\norm*{A_{12}}+\norm*{A_{123}} \\ 0 & 0 & 0 &  0 & \norm*{A_{0}} + \norm*{A_{1}} + \norm*{A_{2}} +\norm*{A_{12}}  &  0 &  0 & \norm*{A_{3}} + \norm*{A_{13}} + \norm*{A_{23}}+ \norm*{A_{123}} \\ 0 & 0 & 0 &  0 & 0  &  \norm*{A_{0}} + \norm*{A_{1}} + \norm*{A_{3}} + \norm*{A_{13}} &  0 & \norm*{A_{2}} + \norm*{A_{12}} + \norm*{A_{23}} +\norm*{A_{123}} \\0 & 0 & 0 &  0 & 0  & 0 &   \norm*{A_{0}} + \norm*{A_{2}} + \norm*{A_{3}} + \norm*{A_{23}} & \norm*{A_{1}} + \norm*{A_{12}} +\norm*{A_{13}} + \norm*{A_{123}} \\ 0 & 0 &  0 & 0 & 0 & 0  &  0 & 1\end{pmatrix}.$%
}
\newlength\TableWidth
  \settowidth\TableWidth{\usebox\mybox}
  \ifnum\TableWidth>\linewidth
    \setlength\TableWidth{\linewidth}
  \fi
  \resizebox{\TableWidth}{!}{\usebox\mybox}
\caption{Example of transition matrix $P$ for $N=3$.}
\label{tb:example}
\end{table*}
\end{center}

\subsection{Expected Time to \acl{FK}}
\label{sec:expect-time-full}

Let $\vec{k}^\ast=\{1,2,\ldots,N\}$ be the state of \ac{FK}.  By
formula~\eqref{eq:P}, $P(\vec{k}^\ast,\vec{k}^\ast)=1$. This means
that the chain has an absorbing state, and  the hitting time  of
this state is just $\tau$, the first time to \ac{FK}.  Hence, we can
compute the expected time to \ac{FK} simply by
\begin{equation}
  \label{eq:method1}
  \Exp{\tau} = (I-Q)^{-1}\mathbf{1},
\end{equation}
where $Q$ is obtained from $P$ by removing the row and the column
relative to state $\vec{k}^\ast$ and
$\mathbf{1}$ is the column vector of ones~\cite{Gehring1976}.
In a similar way, it is possible to compute the other moments of $\tau$.

Even if $I-Q$ is upper triangular and can be block decomposed, the
computation of its inverse may not be affordable
when the cardinality of $\B$ grows.
In Section~\ref{sec:conv-with-high} we will bound the probability of
the event $\{\tau > t\}$.

\subsection{Expected Time to \knowledge{$\delta$}}
\label{sec:expected-time-delta}
Regarding Problem~\ref{problem:deltaknowledge}, we
can easily modify matrix $P$ to obtain
the expected time to \knowledge{$\delta$}.
Every state $\vec{k} \in \operatorname{\field{P}}(\B)$ such that
 \[ \sum\limits_{\vec{l} \in \operatorname{\field{P}}(\vec{k})}
 \frac{\norm{A_{\vec{l}}}}{\norm*{A(a_0)}} \ge \delta \] can be aggregated in the
 absorbing state, summing the corresponding column of $P$ in the last
 column, and then eliminating the column and row corresponding to
 state $\vec{k}$.
In this way it is possible to compute $\Exp{\tau_\delta}$
using~\eqref{eq:method1}.

\subsection{Eigenvalues}
\label{sec:eigenvalues}
The following result fully characterises the spectrum of the matrix
$P$:
\begin{theorem}
  For $\vec{k} \in \mathcal{P}(\B)$, the eingenvalues of $P$ have the
  form
  \[
  \savebox\mybox{%
    $\lambda_{\vec{k}}
    =\frac{1}{\norm*{A(a_0)}}\bigg(\norm*{A_0}+
    \sum\limits_{\substack{\vec{l}\subset\vec{k}\\\abs*{\vec{l}}=1}}\norm*{A_{\vec{l}}}
    +
    \dots+\sum\limits_{\substack{\vec{l}\subset\vec{k}\\\abs*{\vec{l}}=m}}\norm*{A_{\vec{l}}}
    + \dots
    +\sum\limits_{\substack{\vec{l}\subset\vec{k}\\\abs*{\vec{l}}=\abs*{\vec{k}}}}
    \norm*{A_{\vec{l}}}\bigg).$%
  }
  \settowidth\TableWidth{\usebox\mybox}
  \ifnum\TableWidth>\linewidth
  \setlength\TableWidth{\linewidth}
  \fi
  \resizebox{\TableWidth}{!}{\usebox\mybox}
  \]
  \end{theorem}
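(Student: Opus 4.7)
The plan is to combine the upper triangular structure already established in Lemma~\ref{thm:tng} with a direct evaluation of the diagonal entries of $P$ via formula~\eqref{eq:P}.

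First, because Lemma~\ref{thm:tng} shows that $P$ is upper triangular in the ordering induced by set inclusion on $\mathcal{P}(\B)$, the spectrum of $P$ is simply the multiset of diagonal entries $\{P(\vec{k},\vec{k}) : \vec{k}\in\mathcal{P}(\B)\}$. It therefore suffices to verify that $P(\vec{k},\vec{k})$ coincides with $\lambda_{\vec{k}}$ as defined in the statement.

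Second, I would specialise~\eqref{eq:P} to $\vec{l}=\vec{k}$. The inclusion $\vec{k}\subseteq\vec{k}$ holds trivially, and the ``missing information'' $\vec{l}\setminus\vec{k}$ is empty, so the summand $\vec{m}\cup(\vec{l}\setminus\vec{k})$ reduces to $\vec{m}$ and~\eqref{eq:P} collapses to
\[
  P(\vec{k},\vec{k}) \;=\; \sum_{\vec{m}\in\mathcal{P}(\vec{k})} \frac{|A_{\vec{m}}|}{A(a_0)}.
\]
The probabilistic interpretation is transparent: under the teleport law~\eqref{eq:poissonrain}, the knowledge chain remains at $\vec{k}$ exactly when the next user report originates from a tile whose index is a subset of $\vec{k}$, i.e.\ from a portion of $A(a_0)$ that is covered only by neighbours $a_0$ already knows about, so no new knowledge is acquired.

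Third, I would reorganise this sum by stratifying subsets of $\vec{k}$ according to their cardinality,
\[
  \mathcal{P}(\vec{k})
    \;=\; \bigsqcup_{m=0}^{|\vec{k}|} \bigl\{\vec{l}\subseteq\vec{k}:\,|\vec{l}|=m\bigr\}.
\]
The $m=0$ slice contributes a single $|A_0|/A(a_0)$ term (from $\vec{l}=\emptyset$), the intermediate slices reproduce exactly the nested sums $\sum_{\vec{l}\subset\vec{k},\,|\vec{l}|=m}|A_{\vec{l}}|$ appearing in $\lambda_{\vec{k}}$, and the final slice $m=|\vec{k}|$ contains only $\vec{l}=\vec{k}$ itself. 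This matches the expression for $\lambda_{\vec{k}}$ term by term.

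There is no serious obstacle in this argument: the real structural input is Lemma~\ref{thm:tng}, which has already been proved, and the rest is bookkeeping. The one point worth watching is that the symbol $\subset$ used in the theorem statement must be read inclusively (consistent with the last summand forcing $\vec{l}=\vec{k}$), and that the empty subset is counted exactly once in the stratification so as to yield precisely one $|A_0|$ contribution.
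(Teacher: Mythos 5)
Your proposal is correct and follows essentially the same route as the paper: invoke Lemma~\ref{thm:tng} to reduce the spectrum to the diagonal entries, then identify $P(\vec{k},\vec{k})$ with the total (normalised) measure of the tiles indexed by subsets of $\vec{k}$, which is exactly $\lambda_{\vec{k}}$ after regrouping by cardinality. The only difference is cosmetic: the paper derives the diagonal entry via the probabilistic self-transition argument alone, whereas you also verify it by direct substitution of $\vec{l}=\vec{k}$ into~\eqref{eq:P} and spell out the stratification that the paper leaves as ``equivalent to the thesis.''
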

\begin{IEEEproof}
  The matrix $P$ being upper triangular by Lemma~\ref{thm:tng}, the entries
  $P(\vec{k},\vec{k})$ are the eigenvalues of the matrix.
  Let us then imagine to have the knowledge chain in state $\vec{k}$.
  The only way for the chain to undergo a self-transition
  ($\vec{k}\to\vec{k}$) is that the user reports any combination of
  neighbouring \aps that have already been discovered. In other
  words, the knowledge chain undergoes a self-transition if and only
  if the user reports an element of $\mathcal{P}(\vec{k})$.
  Therefore,
  \[\lambda_{\vec{k}} = \frac{1}{\norm*{A(a_0)}}\sum_{\vec{l} \in
    \mathcal{P}(\vec{k})}|A_{\vec{l}}|\,.\]
  Last formula is equivalent to the thesis.
\end{IEEEproof}

Since each eigenvalue is a sum of positive elements,
the second-largest eigenvalue $\tilde{\lambda}$ can be obtained
by maximising over the tiles of order $N-1$:
\begin{equation}
  \label{eq:lambda}
\tilde{\lambda} = \max_{\vec{k}\colon \abs*{\vec{k}}=N-1 } \lambda_{\vec{k}}.
\end{equation}

\subsection{Convergence Properties, Bounds}
\label{sec:conv-with-high}

Using~\eqref{eq:lambda}, it is possible to obtain the following result:
\begin{lemma}% [Upper Bound on $\tau$]
%% Not an upper bound! \tau is a random variable!
\label{cor:bound}
Given $\epsilon>0$, let
\begin{equation}
  \label{eq:bound}
  S(1-\epsilon) = \frac{\log{\epsilon}}{\log{\tilde{\lambda}}}.
\end{equation}
Then, $S(1-\epsilon)$ reports are sufficient to achieve \ac{FK}
with probability greater or equal than $(1-\epsilon)$.
\end{lemma}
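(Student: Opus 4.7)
The plan is to bound the tail $\Prob{\tau > t}$ by exploiting the spectral information already obtained for the knowledge chain, and then invert the bound in $t$ to recover $S(1-\epsilon)$.

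First, I would observe that under Model~\ref{model:rain} the user reports are i.i.d.\ with distribution $\abs*{A_{\vec{j}}}/A(a_0)$ over $\mathcal{P}(\B)$. Define, for each $a_i\in\B$, the probability that a single report \emph{contains} $a_i$,
\[
  q_i \coloneqq \sum_{\vec{j}\ni a_i}\frac{\abs*{A_{\vec{j}}}}{A(a_0)} = \frac{\abs*{A(a_0)\cap A(a_i)}}{A(a_0)}.
\]
The key identification is that the complementary probability $1-q_i$ equals exactly the self-transition probability $\lambda_{\B\setminus\{a_i\}}$ at the $(N-1)$-th order state $\B\setminus\{a_i\}$ appearing in the spectrum characterised in Section~\ref{sec:eigenvalues}; indeed $\lambda_{\B\setminus\{a_i\}}=\sum_{\vec{l}\not\ni a_i}\abs*{A_{\vec{l}}}/A(a_0)=1-q_i$. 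Consequently, by~\eqref{eq:lambda}, $\max_i(1-q_i)=\tilde{\lambda}$.

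Next I would observe that $\{\tau>t\}$ is exactly the event that some neighbour has been missed by all $t$ reports, so by independence and a union bound
\[
  \Prob{\tau>t} \le \sum_{i=1}^N \Prob{a_i\notin\field{K}_t} = \sum_{i=1}^N (1-q_i)^t \le N\,\tilde{\lambda}^{\,t}.
\]
Requiring the right-hand side to be at most $\epsilon$ and solving for $t$ (recall $\log\tilde{\lambda}<0$) yields the stated threshold $S(1-\epsilon)=\log\epsilon/\log\tilde{\lambda}$ up to the additive correction $\log N/\abs*{\log\tilde{\lambda}}$, which is lower order in the interesting regime $\epsilon\to 0$ and may be absorbed in the leading-order estimate; alternatively, one can observe that the dominant contribution comes from the unique $i$ achieving the maximum in~\eqref{eq:lambda} and the other terms decay strictly faster.

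The main obstacle is precisely this matter of the constant in front of $\tilde{\lambda}^{\,t}$: a naive union bound inflates the sufficient number of reports by $\log N/\abs*{\log\tilde{\lambda}}$. Removing the $N$ factor rigorously would instead require a spectral-decomposition argument on the transient block $Q$ (which is upper triangular by Lemma~\ref{thm:tng} with spectral radius $\tilde{\lambda}$), writing $\Prob{\tau>t}=\mathbf{1}^\top Q^t e_{\emptyset}$ and bounding the coefficient of the leading eigenmode. Either route recovers the rate $\tilde{\lambda}^{\,t}$; the remainder of the argument is the elementary inversion of this inequality.
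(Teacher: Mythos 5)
Your proposal is correct in what it actually proves, but it takes a genuinely different and more elementary route than the paper, and the discrepancy you flag at the end is real rather than cosmetic. The paper's proof is a one-line spectral assertion: it invokes the upper triangularity of $P$ and the identification of the second-largest eigenvalue $\tilde{\lambda}$ to write $\Prob{\tau>t}\le\tilde{\lambda}^t$ directly, with no constant in front, and then inverts in $t$. Your argument instead uses the i.i.d.\ structure of the reports and a union bound over the $N$ events ``neighbour $a_i$ not yet seen,'' together with the correct observation that $1-q_i=\lambda_{\B\setminus\{a_i\}}$, yielding $\Prob{\tau>t}\le N\tilde{\lambda}^t$ and hence the threshold $(\log\epsilon-\log N)/\log\tilde{\lambda}$. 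What your honesty about the factor $N$ buys is important: the constant-free inequality the paper asserts is in fact false for small $t$. By inclusion--exclusion the exact tail is
\begin{equation*}
  \Prob{\tau>t}=\sum_{\vec{k}\subsetneq\B}(-1)^{N-\abs*{\vec{k}}+1}\,\lambda_{\vec{k}}^{\,t},
\end{equation*}
whose decay rate is $\tilde{\lambda}$ but whose prefactor need not be at most one. Concretely, for $N=2$ and $t=1$ one has $\Prob{\tau>1}=\bigl(\abs*{A_0}+\abs*{A_1}+\abs*{A_2}\bigr)/A(a_0)$, which strictly exceeds $\tilde{\lambda}=\bigl(\abs*{A_0}+\max(\abs*{A_1},\abs*{A_2})\bigr)/A(a_0)$ whenever both first-order tiles have positive measure. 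So the lemma as stated, with $S(1-\epsilon)=\log\epsilon/\log\tilde{\lambda}$ and no $\log N$ correction, is not supported by the paper's own argument either; your union-bound version (or, asymptotically, your remark that the correction is lower order as $\epsilon\to0$) is the rigorously defensible form, and the spectral-decomposition route you sketch for the transient block $Q$ would likewise only remove the factor $N$ under additional assumptions on the leading eigenmode's coefficient. In short: your proof is sound for the weaker bound $N\tilde{\lambda}^t$, it is a different (and more careful) argument than the paper's, and the gap you identify is a gap in the paper, not in your proposal.
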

\begin{proof}
  Using Lemma~\ref{thm:tng} and Equation~\eqref{eq:lambda} on $P$,
  \begin{equation}
    \label{eq:6}
    \Prob{\tau > t} \le \tilde{\lambda}^t.
  \end{equation}
For a small target tolerance $\epsilon$ of not achieving \ac{FK},
\begin{equation}
  \label{eq:method4}
  \text{if } t\ge \frac{\log{\epsilon}}{\log{\tilde{\lambda}}}
  \quad \Rightarrow \quad \Prob{\tau > t}\le \epsilon.
\end{equation}
\end{proof}

\subsubsection*{\knowledge{$\delta$} Convergence Bounds}
\label{sec:delta-knowl-conv}
Using the same manipulation of the matrix $P$ described in
Section~\ref{sec:expected-time-delta}, Corollary~\ref{cor:bound} can
be applied to the modified matrix to obtain a bound for the
number of steps to have \knowledge{$\delta$} with high probability.

%\subsection{Numerical Estimation of the Time of \knowledge{$\delta$}}
\subsection{\knowledge{$\delta$} and other-than-teleport mobility}
\label{sec:gener-mobil-model}

  Model~\ref{model:rain}  is equivalent to a single user teleporting
  instantaneously to a random point within the coverage area of the
  \ap; time is discrete.
  Thus, at each time the \ap $a_0$ receives a user report from
  a point drawn according to the uniform probability distribution over
  the coverage area $A(a_0)$.
   Having in mind the numerical characterisation of the first
    time to  \knowledge{$\delta$}, the teleport model is
    particularly convenient. This task
    could be in fact carried out within the Monte Carlo paradigm by
    simply throwing sufficiently many
    points at random inside the coverage area $A(a_0)$. In other
    words, it is possible to numerically study the process through
    which  \knowledge{$\delta$} is achieved by sampling
    sufficiently many times a probability density function that is
    uniform over the coverage area $A(a_0)$.

  Model~\ref{model:rain} may prove itself unsatisfactory in a real
  life scenario. The main problem is that
  if we generate a sequence of user reports according to it, any two
  elements of the sequence are independent, whereas in general they
  are not. In each mobility model where the trajectory taken by the
  user is physically feasible, the user positions communicated by two
  successive reports are in fact correlated due to the motion
  constraints.

  Let us imagine that a single user travels inside the coverage area
  $A(a_0)$ according to an unknown mobility model, and let $T(t)$ be the
  trajectory taken by the user. Sampling the
  trajectory at equally-spaced discrete times, we obtain an
  embedded sequence of user locations, which correspond to an embedded
  sequence of user reports. Next, we can analyse the sequence and
  understand after how many steps  \knowledge{$\delta$} has
  been reached. By multiplying this number of steps by the time lapse between
  two consecutive reports (inter-report time), the time to
  \knowledge{$\delta$} can be obtained for that particular realisation
  of the user-reports sequence. Finally, the procedure above can be repeated
  sufficiently many times to estimate with a Monte Carlo method the
  expected time necessary for $a_0$ to reach
  \knowledge{$\delta$}.

  As mentioned above, in a general mobility model it is likely that
  two successive user reports are correlated.
  These correlations may decay as the inter-report time grows larger
  and larger. As an example, let us
  imagine that a single user travels inside the coverage area $A(a_0)$
  according to a \ac{MC}. Let $\pi$ be the equilibrium probability
  measure of the chain and let $t_{\text{mix}}$ be the mixing time of the
  chain, \ie the time needed for the chain to reach equilibrium.
  If the inter-report time is chosen comparable to $t_{\text{mix}}$
  then the time lapse between two successive reports will be
  sufficient for the \ac{MC} to forget the past trajectory; in other
  words, the correlations between consecutive reports will be
  negligible. As a
  consequence, the user locations will be independently drawn from the
  probability measure $\pi$, and the matrix $P$ describing the
  knowledge evolution will become
  \begin{equation}
    \label{eq:carcarlo}
    \tag{\ref*{eq:P}'}
    P(\vec{k},\vec{l}) =
    \begin{cases}
      \sum_{\vec{m}\in\mathcal{P}(\vec{k})}
      \pi\left(A_{\{\vec{m}\cup(\vec{l}\setminus\vec{k})\}}\right)
      \quad& \text{if } \vec{k}\subseteq\vec{l},\\
      0&\text{otherwise.}
    \end{cases}
  \end{equation}

  Therefore, the formulation and the results developed in
  Sections~\ref{sec:expect-time-full}--\ref{sec:conv-with-high} are
  still valid if we consider a single-user mobility model based on a
  \ac{MC}, provided that the time lapse between two consecutive
  reports is of the order of the mixing time of the chain.  Under the
  assumption that user reports are sent at a frequency comparable with
  the inverse mixing time of the mobility \ac{MC}, we can compute an
  upper bound on the time to \knowledge{$\delta$}. Any reporting rate
  higher than $\tfrac{1}{t_\text{mix}} $ will in fact still guarantee
  that $a_0$ achieves \knowledge{$\delta$} of its neighbourhood in at
  most $\Exp{\tau_\delta}\cdot t_\text{mix}$ seconds on average\footnote{Recall that $\Exp{\tau_\delta}$ is measured in number of reports.}.

  \subsubsection{Multi-user Scenario}
  \label{sec:multi-user-scenario}
    We end this section by briefly mentioning a straightforward
    application of Model~\ref{model:rain} in a multi-user scenario.
    Let us imagine that $n$ users may enter, move within, and exit
    $A(a_0)$ according to a hidden mobility model.
    We assume that $n$ is a very large number and that
    it is possible to statistically characterise the stationary
    user-density by means of a probability measure $\pi$ over $A(a_0)$.
    At each time every user may independently send a report with a very small
    probability $p$.
    Then, the number of reports received by $a_0$ in a given time interval is
    approximately Poissonian and the time lapse between two successive reports is
    exponential with parameter $\lambda = np$.
    Next, let $m=\Exp{\tau}$ be the expected time to \ac{FK}, expressed in number of
    reports, returned by~\eqref{eq:carcarlo} and~\eqref{eq:method1};
    the expected time to achieve \ac{FK} is the expectation of the
    first time for a Poisson process of parameter $\lambda$ to hit the
    state $m$.
    A practical example for this kind of scenario in presented in Section~\ref{sec:cells-depl-cong}.

\subsection{Examples}
\label{sec:examples}

\subsubsection{Femtocells Deployment for Residential Use}
\label{sec:femt-depl-resid}

Regarding the use case of femtocell self-organisation presented in Section~\ref{sec:femt-self-conf}, each \bs
 serves a very small number of devices.
 Using data of typical residential densities and coverage areas, a statistic of the tessellation can be devised. If it is possible to establish a time $\tilde{T}$ after which the user position can be considered as drawn
from a uniform distribution, then $S(\delta)$ is an upper bound of
the time to \knowledge{$\delta$} for all the inter-report times
smaller than or equal to $\tilde{T}$.

\subsubsection{Cells Deployed in Congested Areas}
\label{sec:cells-depl-cong}
Opposite to the previous example, cells deployed in congested
places like a mall have an extremely large basin of potential users.
However, in situations where users main interest is other than
connecting to the internet, it is reasonable to expect the single-user
reporting-activity to be rather sporadic.
Therefore, the Poissonian approximation that we have mentioned at the
end of Section~\ref{sec:gener-mobil-model} may be applicable.
In this case, characterising the time to achieve \knowledge{$\delta$}
is possible through a statistic of the typical (or worst case) tessellations.

\section{Simulations}
\label{sec:simulations}
\subsection{Teleport Model on Random Positioned \Aps}
\label{sec:simple}
In this section we offer a preliminary assessment of the possibility
to use the machinery developed so far in real applications.  To this
purpose, we developed a simulation framework in MATLAB and studied a
scenario where \num{8} \aps are positioned on a plane at random
according to a uniform (bivariate) probability distribution,
i.e. \emph{uniformly at random}.
% where \num{8} \aps have been positioned in the space u.a.r.
Each \ap has a circular coverage area of the same size.  We considered
\num{350} different configurations, with the constraint that the
coverage area of $a_0$ has non-void intersection with the coverage
area of the remaining \aps, meaning that \ac{FK} is achieved as soon
as all \num{7} neighbours are reported to $a_0$.  We compute the
tessellation of each configuration using a classical Monte Carlo
sampler.  For each of these \num{350} configurations, we computed the
expected time to \knowledge{\num{0.9}} $\Exp{\tau}$ together with the
number of steps sufficient to guarantee \knowledge{\num{0.9}} with
\SI{90}{\percent} confidence, \ie $S(0.9)$.  The inter-report time
being fixed during this first experiment, the amount of time in
seconds to achieve \knowledge{\num{0.9}} is directly proportional to
the number of steps just evaluated.

Figure~\ref{fig:7moving} displays the empirical probability
mass function of these two quantities. $\Exp{\tau}$ is centered
around \num{10} steps, while $S(0.9)$ is shifted on higher values, as
expected being an upper bound.
\begin{figure}[htbp]
  \centering
  \includegraphics[width=0.6\columnwidth]{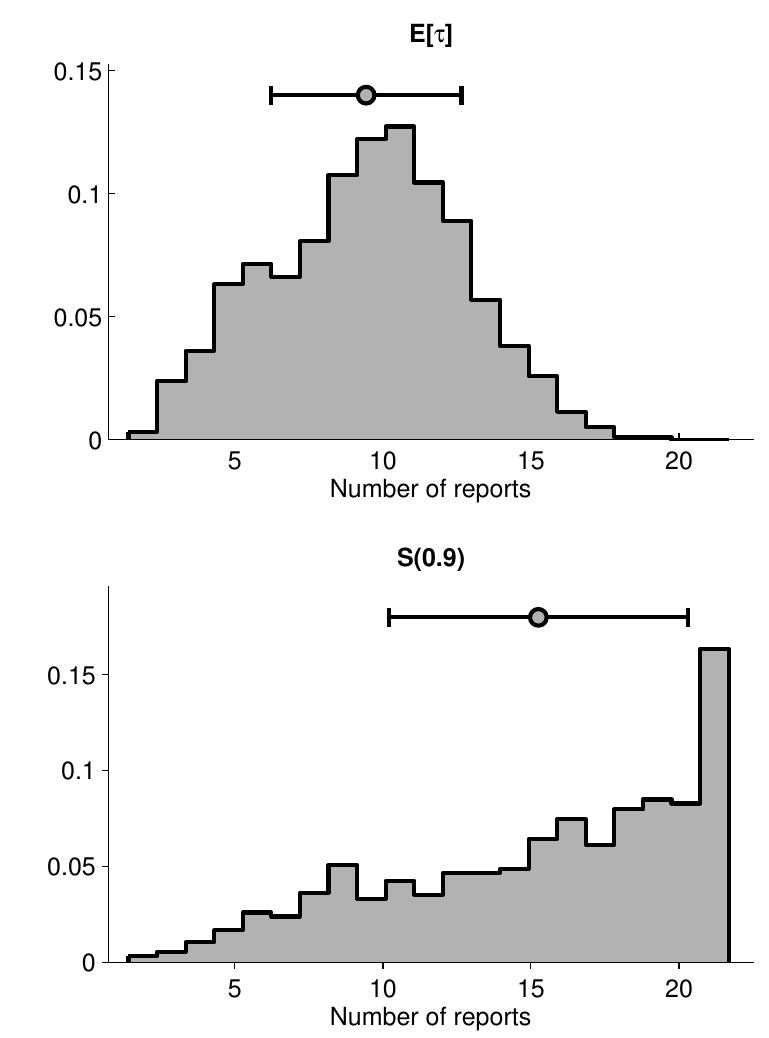}
  \caption[\knowledge{\num{0.9}} distribution]{Empirical probability
    mass function of the expected time to \knowledge{\num{0.9}}, and
    the number of steps to have \knowledge{\num{0.9}} with
    \SI{90}{\percent} confidence, for the teleport model on random
    positioned \aps. Since the inter-report time is fixed, the
    simulation time is directly proportional to the number of reports.}
  \label{fig:7moving}
\end{figure}
\begin{figure}[htbp]
  \centering
  \includegraphics[width=0.7\columnwidth]{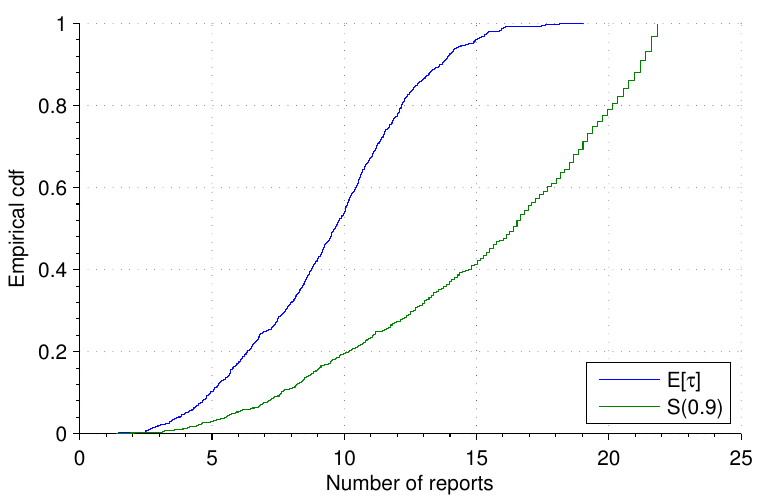}
  \caption[\knowledge{\num{0.9}}, empirical cumulative distribution
  function]{Empirical cumulative distribution function of the expected
    time to \knowledge{\num{0.9}}, and the number of steps to have
    \knowledge{\num{0.9}} with \SI{90}{\percent} confidence, for the
    teleport model on random positioned \aps. Since the inter-report time is fixed, the
    simulation time is directly proportional to the number of reports.}
  \label{fig:7movingCDF}
\end{figure}
Figure~\ref{fig:7movingCDF} shows the empirical cumulative distribution
function of $\Exp{\tau}$ and $S(0.9)$.
We see that \num{16} steps are sufficient to achieve
\knowledge{\num{0.9}} for nearly all scenarios (\SI{95}{\percent}), while we need  \num{22} steps using $S(0.9)$.
We also notice that the bound obtained from~\eqref{eq:bound} is a
conservative estimation, because it uses only the second-largest
eigenvalue $\tilde{\lambda}$. Indeed, it takes into account  only the
slowest way to reach the desired knowledge, while the problem has a
rich combinatorial structure that cannot be completely captured
by~\eqref{eq:6}.

Roughly speaking, a user moving at \SI{0.5}{m/s} according to a random walk model, and providing \emph{at least} one report every hour, can guarantee the \ap will have \knowledge{\num{0.9}} with high probability in less than \SI{5}{h}, and in less than two hours if reports are sent \emph{at least} every 15 minutes (see next section for a more detailed analysis on the interaction between report frequency and our bound). If the local topology is not typically expected to change often, these are acceptable times.

To summarise, simulation on random scenarios show that our proposed bound can be used to estimate the time to \knowledge{$\delta$}. Using realistic values, the expected time to \knowledge{$\delta$} is reasonably small.

\subsection{Random Walk on a Grid}
\label{sec:random-walk-grid}
In order to investigate and confirm the ideas of
Section~\ref{sec:gener-mobil-model}, we simulated the reports sent
with different inter-report times by a random walker that moves within
$A(a_0)$ under the condition of reflective boundary, and compared this
 mobility model with Model~\ref{model:rain} (see
Section~\ref{sec:problem-statement}) for a set of 8 \aps
positioned as described at the beginning of this section.
\begin{figure}[htbp]
  \centering
  \includegraphics[width=0.7\columnwidth]{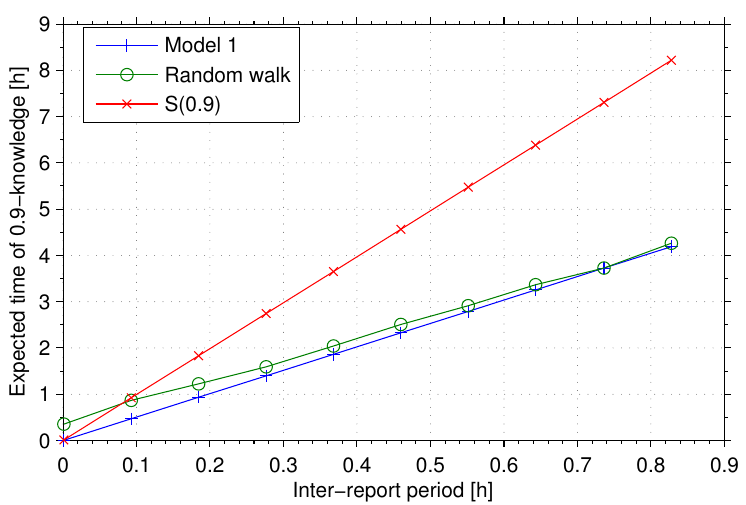}
  \caption[Random Walk]{Empirical mean \knowledge{\num{0.9}} time (in
    hours) of a random walk vs. inter-report period, compared with
    Model~\ref{model:rain} and its bound $S(0.9)$ in a femtocell grid
    of 8 \aps.}
  \label{fig:mobility}
\end{figure}
% \begin{figure}[htbp]
%   \centering
%   \includegraphics[width=\columnwidth]{mobility-nobound}
%   \caption[Random Walk]{Empirical mean \knowledge{\num{0.9}} time (in
%     hours) of a random walk vs. inter-report period, compared with
%     Model~\ref{model:rain} in a femtocell grid
%     of 8 \aps.}
%   \label{fig:mobility}
% \end{figure}

%\subsubsection*{Low-Frequency Reports}
%\label{sec:low-freq-reports}

 In Figure~\ref{fig:mobility} we let the inter-report time  increase and compare the average time to achieve  \knowledge{\num{0.9}} according to both the random walk (green line) and the  teleport model (blue).
  We see that that, if the  inter-report time is sufficiently large, the empirical mean time to achieve  \knowledge{\num{0.9}} for the random walk model is well    approximated by that of Model~\ref{model:rain}.

We assume typical femtocell parameters, \ie that coverage
radius is \SI{50}{\metre} and that the user do a step in a grid of
\SI{2.5}{\metre} every \SI{5}{\second}.
Figure~\ref{fig:mobility} also shows that when reports are sent each
\SI{6}{min} or less, the time to \knowledge{0.9} is smaller than
\SI{1}{h}, but at such high frequency the bound $S(0.9)$ (red line) is not valid anymore.
The reason why more reports than Model~\ref{model:rain} are needed in
the case of high-frequency reports is the following: since the
inter-report time is short, it is likely that many reports will be
sent from the same tile, \ie the knowledge chain will undergo many
self-transitions.

It is important to notice that the inter-report time used in
Figure~\ref{fig:mobility} are far from the theoretical order of magnitude of the
random walk mixing time. Yet Figure~\ref{fig:mobility} suggests that,
for a family of scenarios, it should be possible to determine the value of the inter-report time such that the average time to achieve
\knowledge{\num{0.9}} may be well predicted by Model~\ref{model:rain}.
Once that value of the inter-report time is found, the value of $\mathbb{E}[\tau_{0.9}]$ returned by Model~\ref{model:rain} may serve as an
upper bound to the actual time to achieve \knowledge{\num{0.9}} when
smaller inter-report times are implemented.

To summarise, simulation on random walks corroborate the analysis of Section~\ref{sec:gener-mobil-model}

\subsection{A Realistic Scenario}
\label{sec:realistic-scenario}
A received power map for 4 \bss in the Hynes convention centre
have been generated using the Wireless System Engineering
(WiSE)~\cite{fortune1995wise} software, a comprehensive 3D ray tracing
based simulation package developed by Bell Laboratories.
\begin{figure}[htbp]
  \centering
  \includegraphics[width=0.7\columnwidth]{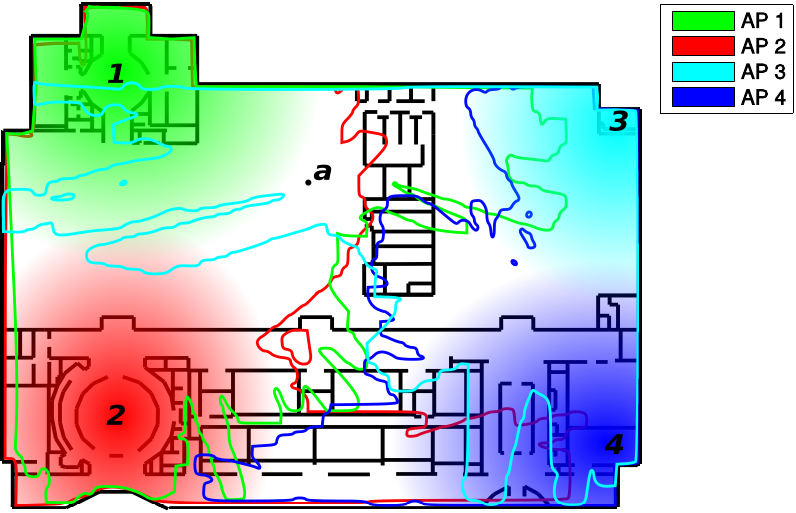}
  \caption[Apartment Example]{Coverage areas at Hynes convention
    centre. The coloured lines delimit the extension of the coverage
    areas.
    Basestations are transmitting at \SI{2.1}{\giga\hertz} with a
    power of \SI{34}{\milli\watt}. Point $a$ lies in the tile $A_{123}$.
  }
  \label{fig:wise}
\end{figure}
\Bss are assumed transmitting at a frequency of \SI{2.1}{\giga\hertz}
with a power of \SI{34}{\milli\watt}. We assume there is a macrocell
that covers the whole building, and we estimate its time to full
knowledge. As before, a Monte Carlo simulation has been made to
estimate the tessellation, and then the expected time to
\knowledge{$\delta$} has been computed using a teleport mobility model
(Model~\ref{model:rain}),
as explained in Section~\ref{sec:expected-time-delta}.

Figure~\ref{fig:wise} shows the corresponding coverage areas
when the power detection threshold is \SI{-70}{\dBm}. Although the shape
of the coverage areas and their intersection is much
more complex than the simple scenario depicted in
Section~\ref{sec:simple}, it is still possible to construct the
tessellation by considering which coverage areas each spatial point
lies in. For example, point $a$ lies in the coverage area of \aps $1$,
$2$, and $3$, so it belongs to the tile $A_{123}$.

Figure~\ref{fig:wise_delta} displays the expected time to
\knowledge{$\delta$}, $\Exp{\tau_\delta}$ when $\delta$ is varied.
We notice a step-function-like behaviour, with a new step that is
added every time a new
state become absorbing, as explained in
Section~\ref{sec:expected-time-delta}.
\begin{figure}[htbp]
  \centering
  \includegraphics[width=0.7\columnwidth]{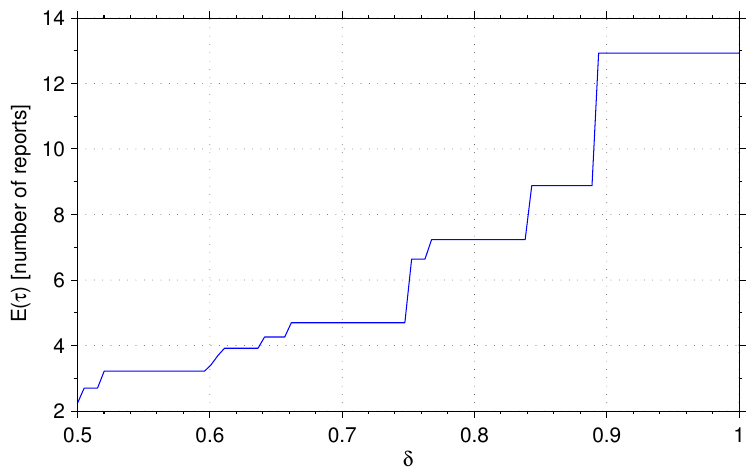}
  \caption[\knowledge{$\delta$} varying $\delta$]{Expected time to
    \knowledge{$\delta$}, $\Exp{\tau_\delta}$ using
    Model~\ref{model:rain} in  Hynes convention centre for different
    values of    the parameter $\delta$. The inter-report time is set
    to the very same value used for Figures~\ref{fig:7moving} and~\ref{fig:7movingCDF}.}
  \label{fig:wise_delta}
\end{figure}
\begin{figure}[htbp]
  \centering
  \includegraphics[width=0.7\columnwidth]{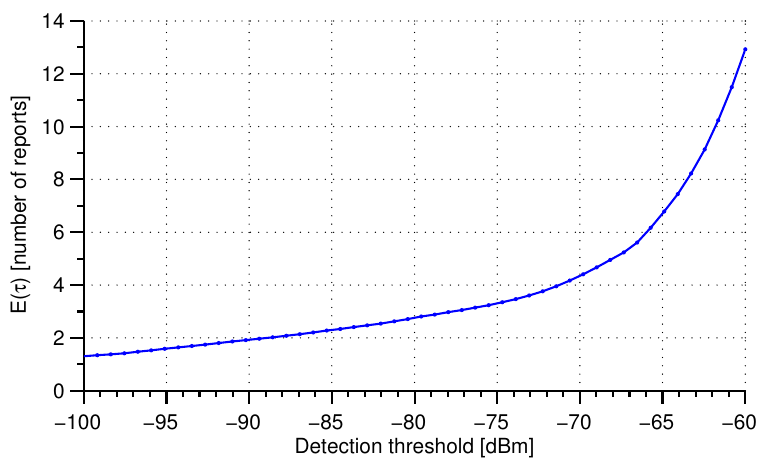}
  \caption[\knowledge{$\delta$} varying detection threshold]{Expected
    time to \ac{FK}, $\Exp{\tau}$, using Model~\ref{model:rain} in
    Hynes convention centre,  for different values of the
    user-detection threshold. The inter-report time is set
    to the very same value used for Figures~\ref{fig:7moving}
    and~\ref{fig:7movingCDF}. }
  \label{fig:wise_threshold}
\end{figure}

Figure~\ref{fig:wise_threshold} shows the behaviour of $\Exp{\tau}$,
the expected time to \ac{FK}, when the user detection threshold varies from a
very conservative value of \SI{-60}{\dBm} to a more realistic one of
\SI{-100}{\dBm}. When the users are more sensitive, the coverage
areas, and specifically the higher order tiles, are bigger, leading
to better performance.  In particular, we  see that an average of 14 steps are enough
to achieve \ac{FK}.

To summarise, these results seem to confirm that the values obtained placing random nodes with circular coverage areas in Section~\ref{sec:simple} are compatible with real world scenarios, so the use of statistics obtained from macroscopic parameters as densities of deployment and distribution of coverage radii can be used as a tool to bound the time to \knowledge{$\delta$}.

\section{Conclusions}
\label{sec:conclusions}

In this paper, we have introduced the problem of user-reports-based \acl{NCLD} and provided a
crisp mathematical formulation of it for a simple mobility model. We have also shown that such mobility model can be effectively used as an
upper bound for a wide range of mobility models when the user reports
frequency is lower than the inverse mixing time of the \acl{MC} of the actual mobility model. Additionally, we have provided a useful method to
estimate the time to \knowledge{$\delta$} when the problem is too complex
to be solved exactly.

Simulations on random scenarios with typical small cells parameters show
that the expected number of reports in order to have a high degree of
knowledge of the local topology is very small.  Roughly speaking, a
user moving at \SI{0.5}{m/s} according to a random walk model, and
providing at least one report every hour, can guarantee the serving \ap will have \knowledge{\num{0.9}} with high probability in less than \SI{5}{h}, and in less than two hours if reports are sent at least every 15 minutes. Since we do not expect the network topology to be affected by high network dynamics, these are acceptable times for the problems of interest.
We encourage the adoption of the presented framework to assess the possibility of employing crowdsourced user reports in other self-configuration problems, comparing the time to \knowledge{$\delta$} with the expected time to convergence of a given decentralised algorithm.

Simulations in more realistic scenarios show that the bounds obtained are compatible with the ones obtained from statistics on random scenarios with similar parameters. This seems to confirm that the use of statistics obtained from macroscopic parameters, such as densities of deployment and distribution of coverage radii, can be used as a tool to bound the time to \knowledge{$\delta$}.

In conclusion, we provide a useful tool to estimate the time to \ac{NCL} construction, which is fundamental to assess whether a decentralised algorithm can be employed in a given network scenario.

\section{Conflict of Interest}
\label{sec:coi}
The authors declare that there is no conflict of interest to disclose regarding the publication of this paper.

\section{Acknowledgments}
\label{sec:acknowledgments}
We would like to thank Anna Zakrzewska, from Bell Laboratories, Alcatel Lucent Ireland, for her insights and useful suggestions.

\addcontentsline{toc}{section}{References}
\bibliographystyle{IEEEtran}
\bibliography{CL}

\end{document}